\theoremstyle{definition}
\newtheorem{ad}{Definitions}
\theoremstyle{plain}
\newtheorem{blub}{Informal Theorem}
\newtheorem{blob}{Theorem}
\title{One Way Function Candidate based on the Collatz Problem}
\author{\CYRR\cyra\cyrd\cyre\,\CYRV\cyru\cyrch\cyrk\cyro\cyrv\cyra\cyrc\,(Rade Vuckovac)}
\date{\vspace{-5ex}}
\begin{document}

\maketitle
\selectlanguage{english}

\begin{abstract}
  The one way function based on the Collatz problem is proposed. It is based on the problem's conditional branching structure which is not considered as important even when the $3x+1$ question is quite famous. The analysis shows why the problem is mathematically so inaccessible and how the algorithm conditional branching structure can be used to construct one way functions. It also shows exponential dependence between algorithm's conditional branching and running cost of algorithm branch-less reductions.
\end{abstract}

\section*{Introduction}

According to Levin \cite{Levin:2003:TOF:772153.772176} the existence of One Way Function (OWF) is arguably the most important question in computing theory. OWF existence would imply $P \neq NP$ and the existence of some very important constructs in cryptography. For example: pseudorandom number generators, pseudorandom functions and various cryptographic protocols. Informally OWF is easy to compute given input $x$. When function description and output $y$ is given, however, it is difficult to guess input $x$.

In the absence of theoretically proven OWF, OWF candidates are used in practice. The bulk of asymmetrical encryption is based on OWF candidates such as factoring and discrete logarithm problems. It is not proven that they are reversible, but so far no efficient and non quantum algorithms are found yet. For example: if a relatively large composite integer is given, it is difficult to decompose it into a product of two integers (integer factorization).
	
	The proposed OWF candidate is based on the $3x+1$ problem. This problem is also known as the Collatz problem and it is a famous problem in mathematics. The problem is easy to state: an input is a positive integer, \emph{if} the input is even, divide it with $2$ \emph{otherwise} multiply it with $3$ and add $1$. The outcome is a new input and the same is repeated ... The conjecture, states that for every positive integer, the procedure will reach $1$ eventually. While the problem does not suffer lack of attention,  no single mathematical structure is associated with it and an arbitrarily chosen iteration behaves as a fairly flipped coin \cite{3x+1}. Therefore, the use of the $3x+1$ problem in cryptography is not surprising. Apple Inc applied for a patent using Collatz conjecture as a system and method for a hash function \cite{ciet2013system}.
	
	 The Wolfram's rule 30 cellular automata \cite{wolfram1984computation} is another example with branching structure. It has almost identical transformation procedure as the $3x+1$ problem. Its English formulation is \cite{Wolfram2002}: "First, look at each cell and its right-hand neighbor. \emph{If} both of these were white on the previous step, then take the new color of the cell to be whatever the previous color of its left-hand neighbor was. \emph{Otherwise}, take the new color to be the opposite of that" (emphasis added). The rule 30 is used as a pseudorandom number generator in Wolfram's Mathematica software. Both concepts rely either on the problem difficulty (Apple's Collatz based hash) or on the empirical evidence (Wolfram's rule 30). 
	
	Generally, an algorithm is perceived as a well defined procedure, but that is not always the case. Conditional branching can undermine procedure definition and make algorithm behaviour quite unpredictable. The OWF proposal uses underlying conditional branching structure of the $3x+1$ transformation (as is in the rule 30) to argue function reversal difficulty. Acquired complexity depends on a number of conditional branching iterations $r$ and not on input size $n$.
	
	For example: Let input $x$ be a positive integer $512$ bits long. Then apply modified Collatz transformation: if $x$ is even, divide it with $2$ otherwise multiply it with $3$ add $1$ and divide it with $2$.  Repeat the procedure  $256$ times and record the latest output ($x_{256})$ (See Figure \ref{fig:collatz}). 
	
	\begin{figure}[!htbp]  
		\begin{center}  
			\includegraphics[height=1.25in,width=4.25in]{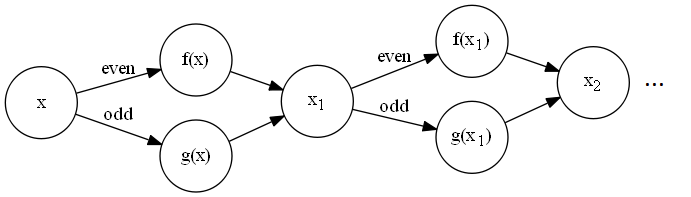}  
			\caption{\small \sl The $3x+1$ composite function; $f(x)=x/2$ and $g(x)=(3x+1)/2$.
				\label{fig:collatz}}  
		\end{center}  
	\end{figure}

Then split $512$ bit input $x$ to two $256$ bit values $xl$ (left part) and $xr$ (right part). Do the same for the latest iteration point $x_{256}$ resulting with $xl_{256}$ and $xr_{256}$. The path value $p$ is $256$ bit path record where "even" route step is $0$ and "odd" $1$ The output is calculated as bellow where $\oplus$ is exclusive or (basically output is xor of input, stop value and parity):
	\[y=xl\oplus xr\oplus xl_{256}\oplus xr_{256}\oplus{p}\]
	
Reversal difficulty lies in the absence of a function description. Without input specified, the transformation is in an ambiguous state and could happen in any of $2^{256}$ ways. Therefore \emph{only particular input defines corresponding transformation}. If an arbitrarily $256$ bit integer is presented as an output $y$, it is hard to prove if it is a valid output (let alone to find matching input $x$) thanks to function description $2^{256}$- sized ambiguous state.

The whole argumentation is based on the algorithm structure and classical notions of polynomial and exponential costs.

\begin{blub}
	To have a proper function description ($3x+1$ for example),the exponential nature of conditional branching must be circumvented in the algorithm implementing that function. Avoidance results in either exhaustive search with accompanying exponential running cost or the conditional branching is replaced with combination of sequence and iteration structure incurring polynomial cost only.
\end{blub}
  The analysis of relations between the running costs and the conditional branching complexity of various $3x+1$ algorithm variants is included in this paper. 
	
The rest of the paper is organised as follows: In section \ref{Prelim} two $3x+1$ algorithms are presented. One is using branching which looks exactly as Figure \ref{fig:collatz}. The other does not use branching at all. In section \ref{owf} the $3x+1$ candidate for one way function is shown and discussed.
	The section \ref{obstacle} contains discussion on the execution paths coverage and its relation to the costs and function description of the algorithms presented.

\section{Preliminaries}\label{Prelim}

In this paper, the argumentation assumes structured program theorem \cite{bohm1966flow} to be true. It states: a control flow graph needs only three structures to compute any computable function. They are: sequence, selection and iteration. The paper assumes branching structure as an elementary construction. Although it could be replaced with another two structures, the replacement can not happen in polynomial running time. That is discussed in section \ref{obstacle}.
	
	The cyclomatic complexity (CC) is also an important part of the following discussion. CC is a software metric which measures the amount of execution paths the program can take through execution. It is really counting predicates (algorithm branching) where every count doubles the amount of execution paths (if the decision is binary) \cite{mccabe1982structured}. The assumption is: if the algorithm has high CC it is in a state where the algorithm can be run but the functional description and the behaviour of it is unknown. For example, if a program has a CC more than 10 (meaning $2^{10}$ execution paths) the program should be rewritten because testing each path of that program becomes very costly and it is questionable what that program had in mind in the first place. Please see NIST article \cite{watson1996structured} for the algorithm CC recommendations. On the other hand, some algorithms have high CC and apparently it can not be avoided. For example the $3x+1$ problem with every iteration doubles the amount of possible execution paths. The same is for some cellular automata such as Wolfram's rule 30 \cite{wolfram1984computation}.
	
	To set a stage for the discussion, the $3x+1$ problem is presented with two algorithms. One with high CC and other with constant CC with respect to the number of iterations. Finally the $3x+1$ OWF is presented in section \ref{owf}.
	All three algorithms are based on the modified $3x+1$ function. The reason is to avoid extra iteration step consisting of dividing even part $3x+1$ with $2$.
	\[
		f(x) = \left\{\begin{array}{lr}
			x/2 & \text{if}\quad x\equiv 0 \\
			(3x+1)/2 & \text{if}\quad x\equiv 1
			\end{array}\right. (mod\enskip2)
	\]

	\subsection{3x+1 with Conditional Branching} 
		Let input $x$ be defined as a positive integer for the composite function and $r$ as a number of function iterations i.e. a number of functions $f(x)$ or $g(x)$ involved in composition:
		
		\[ \left\{ \begin{array}{ll}
		f & \mbox{if $x$ is even};\\
		g & \mbox{if $x$ is odd};
		\end{array} \right\}\circ
		\left\{ \begin{array}{ll}
		f & \mbox{if $x$ is even};\\
		g & \mbox{if $x$ is odd};
		\end{array} \right\}\circ\dots\mbox{$r$ times}	
		\]		
		Where:
		$f(x) = x/2$ is chosen when $x$ is even and
		$g(x) = (3x+1)/2$ is chosen when $x$ is odd. Pseudo code looks like:
		\begin{algorithm}[H]
			\caption{$3x+1$ algorithm}\label{colatz_normal}
			\begin{algorithmic}[1]
				\Procedure{Colatz}{$x,r$}\Comment{starting integer $x$ and iterations $r$}
				\For{\texttt{$i=0;\,i<r;\,i++$}}
				\If{$x$ is even}
				\State $x \gets x/2$
				\Else
				\State $x \gets (3x+1)/2$
				\EndIf
				\EndFor
				\State \textbf{return} $x$\Comment{finishing integer (output)}
				\EndProcedure
			\end{algorithmic}
		\end{algorithm}
		For example, $x = 3$ and $r = 2$ gives following:\\
		
		\begin{tabular}[c]{c c c}
			\toprule
			input 	& transformations 	& output\\
			\midrule
			3		& $g\circ g$		& 8\\
			\bottomrule
		\end{tabular}\\

\subsection{3x+1 Exhaustive Search}
		This algorithm and notations are equivalent with \ref{colatz_normal}. The difference is the structure of the algorithm. While \ref{colatz_normal} uses if/else for function composition, this algorithm uses exhaustive search to find a particular composition to match an input. First, depending on $r$ the list of all combination for composition is created:\\
		
		\begin{tabular}[c]{ccc}
			\toprule
			$r = 2$	& $r = 3$ & $r = \dots$ \\
			\midrule 
			$ f \circ f $	& $ f \circ f \circ f $  & $\dots$ \\ 
			$ f \circ g $	& $ f \circ f \circ g $ &  \\ 
			$ g \circ g $	& $ f \circ g \circ f $ &  \\ 
			$ g \circ f $	& $ f \circ g \circ g $ &  \\ 
			& $ g \circ g \circ g $ &  \\ 
			& $ g \circ g \circ f $ &  \\ 
			& $ g \circ f \circ g $ &  \\
			& $ g \circ f \circ f $ &  \\
			\bottomrule
		\end{tabular}\\
	
		Then, the algorithm tries an input with every composition from the corresponding list and stops if the composite function outputs a whole number.
		Pseudo code is:
		\begin{algorithm}[H]
			\caption{$3x+1$ search algorithm}\label{colatz_ex}
			\begin{algorithmic}[1]
				\Procedure{Colatz}{$x,r$}\Comment{starting integer $x$ and iterations $r$}
				\State \textbf{initialise} $y$ as rational and $i=0$	\Comment $i$ is a counter
				\State \textbf{initialise} $f(x)=x/2$ and $g(x)=(3x+1/2)$
				\State \textbf{initialise} list $l$		\Comment $2^r$ sized 2d array with all combination of $f$ and $g$
				\While{$y$ is rational}  
				\State $y \gets i_{th}\,l\, compositition$		\Comment do ith row from the list
				\State $i++$
				\EndWhile
				\State \textbf{return} $y$\Comment{finishing integer (output)}
				\EndProcedure
			\end{algorithmic}
		\end{algorithm}
		For example, $x = 3$ and $r = 2$. The algorithm will process the first column from previous table $(r=2)$:\\
		\begin{tabular}[l]{ll}
			\midrule
			$ f \circ f\, (3)=3/4$ & result not integer,  try next entry\\
			$ f \circ g\, (3)=2\; 3/4$ & result not integer, try next entry\\
			$ g \circ g\, (3)=8$ & result integer, it will stop and output 8\\
			\bottomrule
		\end{tabular}\\

	\section{One way function}\label{owf}
	
	It is not the first time when the Collatz conjecture is used in a cryptographic application. Apple Inc. applied for a patent using $3x+1$ problem as a hash system and method \cite{ciet2013system}. However approaches from Apple and the proposed OWF are different. Apple uses traditional $3x+1$ transformations (Input and Output column Table \ref{3x+1encoding}) and add more operations. Please see quote below:
	
	\begin{quotation}
	1. A method comprising:
	receiving an input value and an iteration value;
	based on the iteration value, iteratively performing steps comprising:
	if a least significant bit of the input value is 0,\\
	(1) dividing the input value by a first value\\
	if a least significant bit of the input value is 1,\\
	(1) multiplying the input value by a second value,\\
	(2) adding one to the input value, and\\
	(3) applying a modulo operation of a prime value to the input value, to yield a first iteration value,
	to yield an updated input value;
	returning the updated input value as a hash value.
	\end{quotation}
	Note: It appears that the "first value" and "second value" could be numbers other than $2$ and $3$ respectively.
	
	In contrast, the proposed OWF takes exclusive or of the input value, the last iterated value and execution path encoding as an output. Using the example from Table \ref{3x+1encoding}, OWF output is calculated as:
	\[y=9\oplus 13\oplus 46\] 
	
	\subsection{$3x+1$ OWF}
		This algorithm is also based on $3x+1$ problem.
		The difference from $3x+1$ problem algorithm is in a way how the algorithm stops and what is the  actual output. Variables are defined below:  
		\begin{itemize}
			\item Input $x$ is positive integer $512$ bits long (binary encoding) $l_x=512$
			\item The number of iterations $r=256$
			\item The algorithm output is a binary string $ y $ with the same bit length as $ r $
			\item $ s $ is a record of selection decisions through the algorithm execution
		\end{itemize}
		
		The program takes $ x $ as an input, runs as $ 3x+1 $ algorithm with $ r $ iterations and output  $ y $. Pseudo code is:\\
		\begin{algorithm}[H]
			\caption{$3x+1$ OWF algorithm}\label{colatz_owf}
			\begin{algorithmic}[1]
				\Procedure{Colatz OWF}{$x,s,r$}\Comment{$512$ bit $x$, $256$ bit $s$ and iterations $r=256$}
				\State \texttt{create $xl$ and $xr$}	\Comment{equally divided sides of $x$ ($256$ bit)}
				\For{\texttt{$i=0;\,i<r;\,i++$}}
				\If{$x$ is even}
				\State $x = x/2$
				\State $s_i \gets 0$			\Comment{ith bit of $s$ becomes $0$}
				\Else
				\State $x = (3x+1)/2$
				\State $s_i \gets 1$			\Comment{ith bit of $s$ becomes $1$}
				\EndIf
				\EndFor
				\State \texttt{create $x'l$ and $x'r$}	\Comment{divided sides of final $x$ ($x'r\,256$ bit wide)}
				\State $y=xl\oplus xr\oplus x'l\oplus x'r\oplus s$ 	\Comment {$\oplus$ exclusive or}
				\State \textbf{return} $y$\Comment{finishing integer (output)}
				\EndProcedure
			\end{algorithmic}
		\end{algorithm}

	\begin{table}[!htbp]
		\centering
		\caption{$3x+1$ OWF for $x=9$ and $r=6$, $y=9 \oplus 13 \oplus 46$}
		\label{3x+1encoding}
		\begin{tabular}{ccccc}
			\toprule
			Step $r$ & Input & Function & Output & Path Encoding  \\ 
			\midrule
			1  & starts with \textbf{9}    & $(3x+1)/2$    & 14   		          & 1        \\ 
			2  & 14                       & $(x/2)$       & 7                     & 0        \\ 
			3  & 7                        & $(3x+1)/2$    & 11                    & 1        \\ 
			4  & 11                       & $(3x+1)/2$    & 17                    & 1        \\ 
			5  & 17                       & $(3x+1)/2$    & 26                    & 1        \\ 
			6  & 26                       & $(x/2)$       & ends with \textbf{13}  & 0        \\ 
			\bottomrule
			&&&& *decimal value \textbf{46}
		\end{tabular}
	\end{table} 

\section{Discussion on the Branching Obstacle}\label{obstacle}
	
	The execution cost of algorithm \ref{colatz_normal} has linear dependency on the number of iterations $r$. The second algorithm \ref{colatz_ex} is exhaustive search and the cost has exponential dependency on number of iterations $r$ because of $2^r$-sized table used for search.
	The relation of execution paths for the above algorithms are the opposite. For algorithm costs and number of paths relation see Table \ref{costing}. When examined further, both algorithms have problems:
	\begin{itemize}
		\item The problem with the branching algorithm is not so obvious. It runs fine with linear cost w.t.r. of the number of iterations. The problem starts when input is not specifically defined. For example, if $r=256$ and unknown input $x>r$ in length, the $3x+1$ algorithm can take any of $2^{256}$ possible execution paths, practically $2^{256}$ different composite functions could be applied. That is quite opposite of what is expected from a mathematical function with well defined  description. For example $sin$ function is properly specified; if $\sin30=0.5$ and $\sin60=0.866$ then $\sin$ of angles between $30$ and $60$ are somewhere in range of $0.5$ and $0.866$. That means for a range of angles function output behaviour is quite predictable. The same reasoning can not be applied to $3x+1$ algorithm because there is no underlying transformation defined. To have proper $3x+1$ function description (for example algorithm \ref{colatz_owf}) someone has to go through the all inputs ($2^{512}$) and create an ordered table of all inputs and outputs which is not practical for large $r$. 
		\item Non branching algorithms do have well defined execution paths but running costs is exponential with respect to the number of iterations $r$.
   \end{itemize}

\begin{table}[!htbp]
		\centering
		\caption{Costs and path coverage with respect to iterations $r$}
		\label{costing}
		\begin{tabular}{lll}
			\toprule
			 $3x+1$ algorithm variants & running cost & number of paths\\
			\midrule
		 	 branching algorithm \ref{colatz_normal}      & polynomial       	& $2^r$        	\\
			 non branching algorithm \ref{colatz_ex}  & $O(2^{r/2})$     	& constant     	\\ 
			 mirage						 		& polynomial      	& constant     	\\ 
			\bottomrule
		\end{tabular}
	\end{table} 

Lets consider set $V$ containing all the $3x+1$ algorithm variants (or reductions). CC is used to categorise all variants. Using CC is beneficial  because all variants have a certain programming structure and consequently CC metrics assigned. The list of all variants can be divided in two groups according to the algorithm associated CC:
	\begin{itemize}
		\item The subset $C$ where $C \subset	V$ is a branching algorithm subset where CC depends exponentially on iteration ($r$) (1st row Table \ref{costing}). 
		\item The remaining variants are in the second subset $R$ where $R \subset	V$ and $C+R=V$ (2nd row Table \ref{costing}).
	\end{itemize}
	The subset $R$ can be divided again into two subsets using algorithm running cost:
	\begin{itemize}
		\item The subset $E$ with exponential running cost w.r.t. iterations $r$ (exhaustive search) the same as algorithm \ref{colatz_ex} Table \ref{costing} where $E \subset	R$. 
		\item The subset $G$ of remaining variants where running cost is polynomial w.r.t. iteration $m$ (mirage in Table \ref{costing}) where $G \subset R$ and $E+G=R$. Subset $G$ is interesting because its members have polynomial running cost and execution paths are well defined.
	\end{itemize}
	
	Mirage variant (3rd row Table \ref{costing}) is a member of subset $G$. If mirage variant exists, such an algorithm will behave in a similar fashion as a non branching algorithm \ref{colatz_ex} but instead of checking all entries in the $2^r$-sized table it will always choose adequate function composition. Essentially it will take input $x$ and the number of iterations $r$ and will execute $3x+1$ in polynomial time with respect to $r$ without using branching programming structure. Sure enough, mirage reassembles non deterministic polynomial (NP) algorithm from complexity theory where it always chooses the correct path when a branching decision is needed. 
	
	From all above, Theorem \ref{theo} can be deduced. Note that structured program theorem \cite{bohm1966flow} already implies conditional branching as a basic algorithmic structure. The reason for this addition is to clarify cases when conditional branching is replaced with sequence and iteration structures (for example Algorithm \ref{colatz_ex}). 
	
	\begin{ad}
		The list of definition is:
	\begin{itemize}
			\item 	
	 $cb$; Conditional branching is an algorithm structure which causes different sequence execution depending on some comparison. One example is modified Colatz \textit{if else} statement.
		\[
		f(x) = \left\{\begin{array}{lr}
		x/2 & \text{if}\quad x\,even \\
		(3x+1)/2 & \text{otherwise}
		\end{array}\right.
		\]
		\item
	$si$; Sequence and iteration reduction is a situation when the $cb$ is replaced with other two algorithmic structures. For example, algorithms \ref{colatz_normal} and \ref{colatz_ex} have different structures ($cb$ and $si$) but on the same inputs produce identical outputs.
	\item 
	$r$; Iterations represent a number of $cb$ steps. One configuration example is shown in Figure \ref{fig:collatz}. In that case, the number of execution paths is $2^r$ where $2$ is binary branching and $r$ is number of steps. This case is the simplest case for the path counting. Other $bc$ configurations are determined by CC metric procedure \cite{mccabe1982structured} and resulting CC value $c$ is equivalent to $r$ (path count is $2^c$).
	\item 
	$exp$; Set of all algorithms with exponential costs.
	\end{itemize}
	\end{ad}

	\begin{blob}\label{theo}
		There exists at least one conditional branching structure $cb$ from the set of all possible conditional branching structures $CB$. It can not be replaced with equivalent sequence/iteration structure $si$ in algorithm $a_{si}$ and at the same time have polynomial cost for that replacement.
		\[
		\exists cb \in CB:cb=si \wedge \forall a_{si} \in exp
		\] 
	\end{blob}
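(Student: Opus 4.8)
The plan is to prove the existential claim by exhibiting the $3x+1$ branching structure of Figure \ref{fig:collatz} as the witness $cb \in CB$, and then arguing that no branchless reimplementation of it can be simultaneously equivalent and polynomial. Concretely, $cb$ is the $r$-iteration selection of Algorithm \ref{colatz_normal}, whose path count is $2^r$, and the target is to pin down both conjuncts of the statement for this choice.

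First I would discharge the equivalence conjunct $cb = si$. By the structured program theorem \cite{bohm1966flow} the selection inside Algorithm \ref{colatz_normal} can be expressed using only sequence and iteration, and Algorithm \ref{colatz_ex} is one concrete such realisation computing the same input/output relation. Hence the family of $si$ structures equivalent to this $cb$ is nonempty, which settles the first conjunct and reduces the theorem to the cost conjunct $\forall a_{si} \in exp$.

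The substance is that cost conjunct. Here I would use the partition $R = E \cup G$ from the discussion: every branchless variant either searches the $2^r$ path table ($E$, of exponential cost as recorded in Table \ref{costing}) or resolves the path without searching ($G$, the mirage). The aim is to show $G = \emptyset$. I would argue from the path-information content of the composite: fixing the output commits to one of $2^r$ parity sequences, and an $a_{si}$ that performs no data-dependent selection must produce this commitment by straight-line computation over $x$ and $r$. It is essential to formalise ``no selection'' so that any bit-test that steers control flow is charged back to $cb$; otherwise the arithmetic selector $x \mapsto (1-b)(x/2) + b\,((3x+1)/2)$ with $b = x \bmod 2$ is a trivial polynomial counterexample, and the theorem fails. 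Under a model in which the iterate parities may not be consulted to choose the next operation, computing the $i$-th path bit for Collatz amounts to predicting iterate parity, for which no polynomial straight-line formula is known. I would attempt to convert this into a lower bound by a counting/diagonalisation argument: a single polynomial-size branchless program cannot realise all $2^r$ path-functions as $x$ ranges over inputs of length exceeding $r$.

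The hard part is precisely this last lower bound. Establishing $G = \emptyset$ unconditionally is a $P$-versus-$NP$-flavoured separation, since the mirage is explicitly likened to an $NP$ machine that always guesses the correct branch, so I expect the rigorous step either to rely on the model restriction above being made load-bearing, or to be stated conditionally on the absence of a closed form for Collatz parity sequences. My fallback, should an unconditional bound be out of reach, is to prove the contrapositive cost/coverage tradeoff of Table \ref{costing}: full path coverage at polynomial cost forces the reintroduction of data-dependent selection, so the structure is in $C$ after all, which is the obstacle the theorem is meant to name.
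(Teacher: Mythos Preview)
Your approach differs substantially from the paper's. The paper does not exhibit a witness at all: its entire proof is a two-sentence argument by contradiction --- assume no such $cb$ exists, so every conditional branching construct admits a polynomial-cost sequence/iteration equivalent, and hence ``every program or algorithm can be constructed with sequences and iterations only.'' That last clause is presented as the absurdity, with the implicit appeal being that selection would then be a redundant primitive in the structured program theorem.

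Your route --- take Algorithm~\ref{colatz_normal} as the explicit witness, discharge $cb=si$ via Algorithm~\ref{colatz_ex}, and then attack $G=\emptyset$ --- is more concrete and, crucially, more candid about where the difficulty sits. You correctly flag that the load-bearing step is a lower bound of $P$-versus-$NP$ strength, and you also spot the arithmetic-selector loophole $(1-b)(x/2)+b\,(3x+1)/2$ with $b=x\bmod 2$, which must be excluded by the computational model before the statement is even well-posed. The paper's proof addresses neither issue. Indeed, the structured program theorem already guarantees that selection can be eliminated (with no cost claim), so the sentence the paper offers as a contradiction is in fact a known truth, not an absurdity; the cost clause is precisely what is in dispute, and it is never engaged. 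In short, neither argument establishes the theorem unconditionally, but yours localises the missing step (proving $G=\emptyset$, or equivalently ruling out the mirage variant) while the paper's contradiction is not one. Your fallback --- proving only the cost/coverage tradeoff of Table~\ref{costing} rather than the full theorem --- is the honest endpoint here and is already stronger than what the paper's proof delivers.
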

	
	\begin{proof}
		The opposite of the theorem statement is assumed (that none of $cb$ exists). Consequently, every possible conditional branching construct $cb$ will have equivalent sequence/iteration combination $si$ with polynomial cost reduction. In other words, every program or algorithm can be constructed with sequences and iterations only.
	\end{proof}
	
	The value of the above discussion is in categorisation applicability to every possible branching scenario. If all conditional branching cases are analysed and if it is found that every branching case has an accompanying mirage variant then branching programming structure is \emph{redundant}. Otherwise, designing algorithms with high cyclomatic complexity and asking difficult questions without specifying input is quite easy.

\bibliographystyle{unsrt}
\bibliography{owfcp} 

\end{document}